\documentclass[journal]{IEEEtran}
\usepackage[utf8]{inputenc}
\usepackage[T1]{fontenc}
\usepackage{amsmath,amssymb,bm}
\usepackage{graphicx}
\usepackage{url}
\usepackage{color}
\usepackage{xcolor}

\usepackage{eso-pic}
\usepackage{amsmath,amssymb,amsthm,amsfonts,bm}
\usepackage{graphicx}
\usepackage{algorithm}
\usepackage[noend]{algpseudocode}
\usepackage{booktabs}
\usepackage{cite}
\usepackage{xcolor}
\usepackage{stmaryrd}
\usepackage{url}
\usepackage{tikz}
\usetikzlibrary{arrows.meta,positioning,decorations.pathreplacing,patterns,shapes.geometric}
\usepackage{url}
\usepackage[colorlinks=true,linkcolor=blue,citecolor=blue,urlcolor=blue]{hyperref}

\graphicspath{{figs/}}

\newtheorem{theorem}{Theorem}

\newtheorem{proposition}[theorem]{Proposition}
\newtheorem{remark}{Remark}

\newcommand{\vect}[1]{\bm{#1}}
\newcommand{\mat}[1]{\bm{#1}}
\newcommand{\set}[1]{\mathcal{#1}}
\newcommand{\expect}{\mathbb{E}}
\newcommand{\complex}{\mathbb{C}}

\begin{document}
\AddToShipoutPictureFG*{%
  \AtPageUpperLeft{%
    \raisebox{-0.30in}[0pt][0pt]{%
      \makebox[\paperwidth][c]{%
        \fontsize{7}{8}\selectfont
        \textit{This work has been submitted to the IEEE for possible
        publication. Copyright may be transferred without notice,
        after which this version may no longer be accessible.}%
      }%
    }%
  }%
}

\title{Visibility-Region Coupling in XL-MIMO AGV Fleets: Triple-Role Modeling and Masked Beamforming}

\author{
\IEEEauthorblockN{Changhao He,
                      Xiaojuan Zhang,~\IEEEmembership{Senior Member,~IEEE}}

 \thanks{C. H. He is with King Abdullah University of Science and Technology (KAUST), Saudi Arabia (changhao.he@kaust.edu.sa).
 X. J. Zhang is with the Institute of Advanced Intelligence and Computing, Agency for Science, Technology and Research (A*STAR IAIC), Singapore (xiaojuanzhang@ieee.org).
             Corresponding author: X. J. Zhang. }}

\maketitle

\begin{abstract}
Extremely large-scale multiple-input multiple-output (XL-MIMO) is a promising technology for supporting automated guided vehicle (AGV) fleets in smart port terminals. However, the metallic container environment induces spatial non-stationarity, whereby each AGV is visible to only a subset of the array, referred to as its visibility region (VR). Unlike existing XL-MIMO models that assume user-independent VRs, we show that each AGV simultaneously acts as a communication user, a metallic scatterer, and a blocker, resulting in coupled user channels and VRs. We formulate this \emph{triple-role} effect through a VR-coupled channel model and develop a VR-aware downlink beamforming framework based on masked weighted minimum mean-square error (WMMSE), where the masking operation exactly enforces VR support constraints while significantly reducing computational complexity. Simulation results in a realistic smart port scenario demonstrate more than a threefold sum-rate improvement over VR-unaware baselines, with the gains becoming increasingly pronounced as fleet density increases.
\end{abstract}


\begin{IEEEkeywords}
XL-MIMO, visibility region, spatial non-stationarity, AGV, channel modeling, smart port.
\end{IEEEkeywords}

\section{Introduction}
\label{sec:intro}

\IEEEPARstart{M}{odern} automated container terminals rely on fleets of automated guided vehicles (AGVs) that require ultra-reliable and low-latency wireless connectivity for real-time coordination and navigation~\cite{Bauk2019}. Extremely large multiple-input multiple-output (XL-MIMO) systems, equipped with hundreds of antenna elements over large apertures, provide the array and multiplexing gains needed to support dense AGV deployments~\cite{Lu2024}. Owing to the large array aperture, however, users operate in the radiative near-field and experience spatially non-stationary channels, where each user is visible to only a subset of the array, referred to as its visibility region (VR)~\cite{Lu2024,Zhi2023}.

Container terminals further complicate wireless propagation. Dense metallic container stacks create severe blockage and strong specular reflections, resulting in highly location-dependent propagation characteristics~\cite{Willhammar2022}. Existing studies have investigated VR modeling~\cite{Zhi2023}, energy-efficient transmission under spatial non-stationarity~\cite{Zhang2024}, sub-array selection~\cite{Ubiali2026}, VR detection~\cite{Xu2024}, and modular XL-MIMO architectures~\cite{Mumtaz2025}. Nevertheless, these works generally assume user-independent propagation and treat VRs as static or independently varying across users.

A distinctive characteristic of AGV fleets is that every AGV simultaneously acts as a communication user, a metallic scatterer generating additional reflection paths, and a physical blocker that may obstruct propagation for neighboring AGVs~\cite{Qin2026}. Consequently, the channel and VR of one AGV depend not only on its own position but also on the locations of surrounding AGVs, giving rise to dynamic inter-user VR coupling that is absent from conventional XL-MIMO channel models. Exploiting this coupling is therefore essential for efficient resource allocation in dense AGV networks.

The main contributions of this letter are as follows.
\begin{itemize}
    \item We develop a near-field spatially non-stationary channel model that decomposes the NLoS channel into static container-induced and dynamic AGV-induced components. Based on this model, we characterize the \emph{AGV triple role} and establish a VR-blocking model that captures inter-user VR coupling.

    \item We propose a two-timescale VR-aware transmission framework. On the slow timescale, sub-array association is optimized via mixed-integer linear programming (MILP). On the fast timescale, downlink beamforming is solved using a VR-constrained weighted minimum mean-square error (WMMSE) algorithm that exploits the spatial sparsity introduced by VRs.

    \item Simulation results in a realistic container-terminal scenario demonstrate that the proposed scheme substantially outperforms conventional VR-unaware and static-VR baselines, achieving more than threefold improvement in sum rate and up to 35\% gain over a greedy association method in dense AGV deployments.
\end{itemize}

\section{System Model and Triple-Role Coupling}
\label{sec:system}

\begin{figure}[t]
    \centering
    \includegraphics[width=0.96\columnwidth]{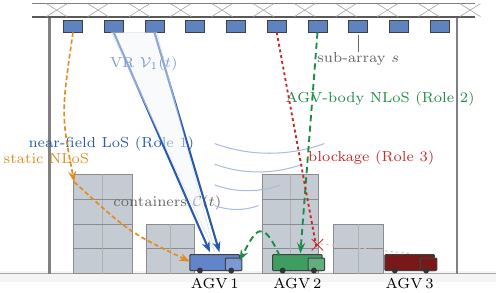}
    \caption{Smart-port XL-MIMO scenario. A gantry-mounted modular BS with $S$ sub-arrays serves AGVs in container canyons. AGV~2 is simultaneously a user (near-field LoS), a scatterer opening an NLoS path to AGV~1, and a blocker shadowing a sub-array link to AGV~3.}
    \label{fig:system}
\end{figure}



A modular XL-MIMO base station (BS) with $M=S\,M_s$ antennas, partitioned into $S$ sub-arrays of $M_s$ elements each, serves $K$ single-antenna AGVs on the downlink (Fig.~\ref{fig:system}). The BS is mounted at $h_{base} =20m$, while AGVs travel at $v=5-20km/h$. The container topology $\set{C}(t)$ is available from the terminal operating system and evolves much more slowly than AGV mobility. Unlike conventional XL-MIMO systems, each AGV simultaneously acts as a communication user, a metallic scatterer, and a blocker. For brevity, the slot index $t$ is omitted from instantaneous channel quantities.

\subsection{VR-Coupled Near-Field Channel}

Let $\delta_{k,s}\in\{0,1\}$ denote whether sub-array~$s$ is visible to AGV~$k$, with visibility region (VR) $\set{V}_k=\{s:\delta_{k,s}=1\}$. The aggregate channel is
\begin{equation}
    \vect{h}_k=\sum_{s=1}^{S}\delta_{k,s}\vect{h}_{k,s},
\end{equation}
where each visible sub-array follows the near-field Rician model
\begin{equation}\label{eq:rician}
    \vect{h}_{k,s}=\sqrt{\beta_{k,s}}\!\left(\sqrt{\tfrac{\kappa_{k,s}}{\kappa_{k,s}+1}}\,\vect{a}_s(\vect{p}_k)+\sqrt{\tfrac{1}{\kappa_{k,s}+1}}\,\vect{h}_{k,s}^{\text{NLoS}}\right)\!,
\end{equation}
with $\beta_{k,s}$ the path gain, $\kappa_{k,s}$ the Rician factor, and $\vect{h}_{k,s}^{\text{NLoS}}$ defined in~\eqref{eq:nlos}. The near-field steering vector is
\begin{equation}\label{eq:steering}
    [\vect{a}_s(\vect{p}_k)]_m=\exp\!\Big(\!-j\tfrac{2\pi}{\lambda}\big(\|\vect{p}_k-\vect{u}_{s,m}\|-\|\vect{p}_k-\vect{u}_{s,0}\|\big)\Big),
\end{equation}
where $\vect{p}_k$ is the AGV position, $\vect{u}_{s,m}$ is the $m$-th antenna element of sub-array~$s$, and $\vect{u}_{s,0}$ is its phase-reference center. Unlike conventional XL-MIMO models with static or user-independent VRs, $\set{V}_k$ depends jointly on the AGV position, the container topology, and neighboring AGVs, leading to coupled, time-varying visibility regions.

\subsection{Triple-Role VR Coupling}

The NLoS component comprises static container reflections and dynamic AGV-body scattering:
\begin{equation}\label{eq:nlos}
\begin{split}
    \vect{h}_{k,s}^{\text{NLoS}}={}&\underbrace{\textstyle\sum_{\ell=1}^{L_{k,s}^{\text{C}}} g_{k,s,\ell}^{\text{C}}\,\vect{a}_s(\vect{q}_{k,s,\ell}^{\text{C}})}_{\text{container reflections}}\\[-2pt]
    &+\underbrace{\textstyle\sum_{j\neq k}\sum_{\ell=1}^{L_{k,s,j}^{\text{A}}} g_{k,s,j,\ell}^{\text{A}}\,\vect{a}_s(\vect{q}_{k,s,j,\ell}^{\text{A}})}_{\text{AGV-body scattering}},
\end{split}
\end{equation}
where container reflections depend on the topology $\set{C}(t)$, whereas AGV-induced paths depend on neighboring positions $\{\vect{p}_j\}_{j\neq k}$ and disappear when $L^{\text{A}}_{k,s,j}=0$.

Each AGV simultaneously acts as a communication \emph{user}, a metallic \emph{scatterer}, and a physical \emph{blocker}. The blocking effect modifies the VR according to
\begin{equation}\label{eq:blocking}
    \delta_{k,s}=\delta_{k,s}^{\text{C}}\cdot\textstyle\prod_{j\neq k}\bigl(1-b_{k,s,j}\bigr),
\end{equation}
where $\delta_{k,s}^{\text{C}}$ is the container-only VR indicator and $b_{k,s,j}\in\{0,1\}$ indicates whether AGV~$j$ blocks the dominant path between sub-array~$s$ and AGV~$k$. Consequently,
\begin{equation}\label{eq:vr-coupled}
    \set{V}_k=f\bigl(\vect{p}_k,\,\set{C}(t),\,\{\vect{p}_j\}_{j\neq k}\bigr),
\end{equation}
so both $\vect{h}_k$ and $\set{V}_k$ depend on the positions of \emph{all} AGVs. Unlike conventional XL-MIMO models with independent user channels, the resulting inter-user coupling is dynamic yet predictable because AGV trajectories are coordinated by the terminal operating system.


\subsection{Received Signal}

With beamformer $\vect{w}_k\in\complex^M$ and unit-power symbol $x_k$, AGV~$k$ receives
\begin{equation}\label{eq:rx}
    y_k=\vect{h}_k^H\vect{w}_k x_k+\sum_{j\neq k}\vect{h}_k^H\vect{w}_j x_j+n_k,
\end{equation}
where $n_k\sim\set{CN}(0,\sigma^2)$. Since $\vect{h}_k$ is determined by the fleet-coupled VR model, both the desired signal and interference depend on the positions of neighboring AGVs. The resulting SINR and achievable rate are
\begin{equation}\label{eq:sinr}
    \gamma_k=\frac{|\vect{h}_k^H\vect{w}_k|^2}{\sum_{j\neq k}|\vect{h}_k^H\vect{w}_j|^2+\sigma^2},\qquad
    R_k=\log_2(1+\gamma_k),
\end{equation}
subject to the total transmit-power  $\sum_k\|\vect{w}_k\|^2\le P_{\text{tot}}$.

\section{Two-Timescale VR-Aware Beamforming}
\label{sec:solution}

We maximize the PF-weighted sum-rate over the slow-stage association $\mat{A}\in\{0,1\}^{S\times K}$ ($A_{s,k}=1$ if sub-array~$s$ may serve AGV~$k$) and fast-stage beamformer $\mat{W}$:
\begin{align}
    \max_{\mat{A},\mat{W}}\;&\textstyle\sum_{k=1}^{K}\alpha_k R_k \label{eq:opt}\\
    \text{s.t.}\;\; & A_{s,k}\le\delta_{k,s},\;\; \textstyle\sum_k A_{s,k}\le N_s^{\max},\;\; \textstyle\sum_s A_{s,k}\le L_k^{\max},\notag\\
    & \mathrm{supp}(\vect{w}_k)\subseteq\{s:A_{s,k}=1\},\;\; \textstyle\sum_k\|\vect{w}_k\|^2\le P_{\text{tot}},\notag
\end{align}
where $\alpha_k=1/\bar{R}_k$ is the PF weight ($\bar{R}_k$ the historical mean rate), $N_s^{\max}$ the per-sub-array user capacity, and $L_k^{\max}$ the per-user sub-array budget. Problem~\eqref{eq:opt} is mixed-integer and non-convex, and its variables live on different timescales: $\mat{A}$ need change only with $\set{C}(t)$, whereas $\mat{W}$ tracks fast fading and motion. We therefore decompose it.

\subsection{Slow Stage: Association}
Over a topology epoch we maximize a long-term surrogate. With $\bar{\delta}_{k,s}=\expect[\delta_{k,s}]$ and $\bar{\beta}_{k,s}=\expect[\beta_{k,s}]$ estimated by Monte-Carlo sampling of the fleet through the coupled map~\eqref{eq:vr-coupled}, we solve the binary linear program $\max_{\mat{A}}\sum_{s,k}U_{s,k}A_{s,k}$ subject to the first three constraints of~\eqref{eq:opt}, with utility
\begin{equation}\label{eq:utility}
    U_{s,k}=\bar{\beta}_{k,s}-\gamma\!\textstyle\sum_{j\neq k}\bar{\delta}_{j,s}.
\end{equation}
The first term favors strong, reliably visible links; the contention penalty ($\gamma>0$) discourages overloading popular sub-arrays. Crucially, $\bar{\delta}_{k,s}$ follows~\eqref{eq:blocking} over sampled positions, embedding AGV-blocking statistics; associating on the static $\bar{\delta}^{\text{C}}_{k,s}$ instead systematically over-commits shadowed sub-arrays.


\subsection{Fast Stage: Masked WMMSE}

Given $\mat{A}$ and instantaneous CSI, the \emph{effective} active set is
$\set{A}_k=\{s:A_{s,k}\delta_{k,s}=1\}$, i.e., the intersection of the association and instantaneous VR. The beamformer is restricted to $\set{I}_k=\bigcup_{s\in\set{A}_k}\set{M}_s$ through the diagonal mask $\mat{D}_k$, where $[\mat{D}_k]_{mm}=1$ iff $m\in\set{I}_k$. Thus, once $\mat{A}$ is fixed, the support is fixed and the beamforming subproblem becomes continuous. Applying WMMSE~\cite{Shi2011}, the MMSE receiver, error, and PF-weighted MSE weight are
\begin{equation}\label{eq:wmmse-ue}
    u_k=\frac{\vect{h}_k^H\vect{w}_k}{\sum_j|\vect{h}_k^H\vect{w}_j|^2+\sigma^2},\;\;
    e_k=1-u_k^*\vect{h}_k^H\vect{w}_k,\;\;
    \omega_k=\frac{\alpha_k}{e_k},
\end{equation}
and the masked beamformer update is
\begin{equation}\label{eq:wmmse-v}
    \vect{w}_k=\mat{D}_k\Bigl(\textstyle\sum_j\omega_j|u_j|^2\,\vect{h}_j\vect{h}_j^H+\mu\mat{I}\Bigr)^{-1}\!\omega_k u_k\,\vect{h}_k,
\end{equation}
where $\mu\ge0$ is obtained by bisection to satisfy the power constraint. Since $\mat{D}_k$ suppresses inactive antennas, \eqref{eq:wmmse-v} operates only on the active set $\set{I}_k$.

After each slot, the PF weights are updated as
\begin{equation}\label{eq:pf-update}
    \bar{R}_k\leftarrow(1-\rho)\bar{R}_k+\rho R_k,\qquad
    \alpha_k=1/\bar{R}_k,
\end{equation}
so AGVs experiencing persistent blockage are assigned higher priority, improving long-term fairness.

\subsection{Disjoint-VR Structure}

The following proposition characterizes the structural objective of the slow-stage association. Let $\set{A}_k$ denote the effective active sub-array set of AGV~$k$, and let $\set{I}_k=\bigcup_{s\in\set{A}_k}\set{M}_s$ be its corresponding antenna support.

\begin{proposition}\label{prop:disjoint}
Suppose the effective active sets are pairwise disjoint and mutually invisible, i.e., for all $k\neq j$, $\set{A}_k\cap\set{A}_j=\varnothing$ and $\delta_{k,s}=0$ for every $s\in\set{A}_j$. Then: \emph{(i)} the fast-stage beamforming problem decouples across users; \emph{(ii)} the multi-user interference vanishes, i.e., $\vect{h}_k^H\vect{w}_j=0$ for all $j\neq k$; and \emph{(iii)} the PF-weighted sum-rate $\sum_k\alpha_kR_k$ attains its interference-free upper bound.
\end{proposition}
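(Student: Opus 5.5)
The plan is to prove (ii) first, because (i) and (iii) follow from it. The starting point is the structure of $\vect{h}_k$ and of the masked beamformers. By the aggregate channel model, $\vect{h}_k=\sum_s\delta_{k,s}\vect{h}_{k,s}$, where $\vect{h}_{k,s}$ is supported on the antenna block $\set{M}_s$. Hence $\mathrm{supp}(\vect{h}_k)\subseteq\bigcup_{s\in\set{V}_k}\set{M}_s$. By construction of the mask $\mat{D}_j$ in \eqref{eq:wmmse-v}, and by the support constraint in \eqref{eq:opt}, every feasible beamformer satisfies $\vect{w}_j=\mat{D}_j\vect{w}_j$, so $\mathrm{supp}(\vect{w}_j)\subseteq\set{I}_j$.

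For $j\neq k$, write
\begin{equation*}
\vect{h}_k^H\vect{w}_j=\sum_{s\in\set{A}_j}\delta_{k,s}\,\vect{h}_{k,s}^H[\vect{w}_j]_{\set{M}_s},
\end{equation*}
where $[\vect{w}_j]_{\set{M}_s}$ is the restriction of $\vect{w}_j$ to block $s$. The mutual-invisibility hypothesis gives $\delta_{k,s}=0$ for every $s\in\set{A}_j$, so each term vanishes and $\vect{h}_k^H\vect{w}_j=0$. This is exactly (ii). The identity holds for every masked beamformer, not only the WMMSE fixed point, and this uniformity is what makes (i) work.

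For (i), substitute (ii) into \eqref{eq:sinr}. This gives $\gamma_k=|\vect{h}_k^H\vect{w}_k|^2/\sigma^2$, so $R_k$ depends on $\vect{w}_k$ alone. The objective $\sum_k\alpha_kR_k$ is then separable. Disjointness $\set{A}_k\cap\set{A}_j=\varnothing$ means the supports $\set{I}_k$ are disjoint, so the variables share no coordinates. The only remaining coupling is the total power budget. I would handle it by splitting the fast-stage problem into per-user subproblems $\max_{\vect{w}_k=\mat{D}_k\vect{w}_k,\ \|\vect{w}_k\|^2\le p_k}\alpha_k\log_2(1+|\vect{h}_k^H\vect{w}_k|^2/\sigma^2)$, joined by a scalar allocation of $\sum_kp_k\le P_{\text{tot}}$. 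I expect the main obstacle to be stating precisely what "decouples" means in the presence of this shared budget. I would phrase (i) as decoupling of the beam-direction design, with only a one-dimensional power allocation linking the users.

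Next I would check consistency with the algorithm. In \eqref{eq:wmmse-v}, the matrix $\sum_j\omega_j|u_j|^2\vect{h}_j\vect{h}_j^H$ restricted to $\set{I}_k$ only involves $\vect{h}_k$, because each $\vect{h}_j$ with $j\neq k$ vanishes on $\set{I}_k$. The masked update therefore reduces to per-user MRT-type updates. Each per-user subproblem has the closed form $\vect{w}_k=\sqrt{p_k}\,\mat{D}_k\vect{h}_k/\|\mat{D}_k\vect{h}_k\|$, with rate $\log_2(1+p_k\|\mat{D}_k\vect{h}_k\|^2/\sigma^2)$.

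For (iii), I would define the interference-free upper bound as follows. For any feasible $\mat{W}$ and every $k$, dropping the nonnegative interference term gives $\gamma_k\le|\vect{h}_k^H\vect{w}_k|^2/\sigma^2$. Hence $\sum_k\alpha_kR_k\le\sum_k\alpha_k\log_2(1+|\vect{h}_k^H\vect{w}_k|^2/\sigma^2)$. Maximizing the right-hand side over masked beamformers under the power budget gives the bound, which is the water-filling value $\max_{\sum p_k\le P_{\text{tot}}}\sum_k\alpha_k\log_2(1+p_k\|\mat{D}_k\vect{h}_k\|^2/\sigma^2)$. Under the hypotheses, (ii) makes the first inequality an equality for every feasible $\mat{W}$. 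The maximizer of the decoupled problem from (i) is therefore feasible and achieves the bound, so the bound is attained.
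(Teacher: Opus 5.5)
Your proposal is correct and follows essentially the same route as the paper. You use the support argument $\vect{h}_k^H\vect{w}_j=(\mat{D}_j\vect{h}_k)^H\vect{w}_j=0$ to get (ii), then reduce the SINR to $|\vect{h}_k^H\vect{w}_k|^2/\sigma^2$ and conclude with per-user MRT along $\mat{D}_k\vect{h}_k$. Your handling of the residual coupling through $\sum_k\|\vect{w}_k\|^2\le P_{\text{tot}}$ (a scalar power split) and your explicit definition of the interference-free bound make (i) and (iii) more precise than the paper's version, which simply asserts full separability.
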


\begin{IEEEproof}
Since $\vect{w}_j=\mat{D}_j\vect{w}_j$ is supported on $\set{I}_j=\bigcup_{s\in\set{A}_j}\set{M}_s$, while the mutual-invisibility condition implies $\mat{D}_j\vect{h}_k=\vect{0}$ for all $k\neq j$, it follows that
\[
\vect{h}_k^H\vect{w}_j
=\vect{h}_k^H\mat{D}_j\vect{w}_j
=(\mat{D}_j\vect{h}_k)^H\vect{w}_j
=0.
\]
Hence the SINR in~\eqref{eq:sinr} reduces to
$\gamma_k=\frac{|\vect{h}_k^H\vect{w}_k|^2}{\sigma^2},$ which depends only on $\vect{w}_k$. Therefore, the PF objective becomes separable across users, and each user is independently optimized by the maximum-ratio beamformer $\vect{w}_k\propto\mat{D}_k\vect{h}_k$ under the power constraint, establishing the proposition.
\end{IEEEproof}

\begin{remark}
Proposition~\ref{prop:disjoint} identifies mutually invisible, disjoint VRs as the ideal interference-free operating point. Although this condition is generally unattainable in dense AGV deployments due to residual visibility and scattering, it provides the design target for the slow-stage association. Specifically, the contention penalty in~\eqref{eq:utility} discourages assigning sub-arrays that are visible to multiple AGVs, thereby steering the association toward disjoint VRs whenever permitted by the fleet geometry. The remaining overlap is subsequently mitigated by the masked WMMSE beamforming.
\end{remark}


\subsection{Complexity}

The slow stage is dominated by Monte Carlo estimation of the coupled VR statistics, requiring $\set{O}(N_{\text{s}}KS(K+N_{\text{box}}))$ operations. Since it is executed only once per topology epoch, its amortized per-slot complexity is negligible.

For the fast stage, let $M_k=|\set{I}_k|\le L_k^{\max}M_s$ denote the number of active antennas for AGV~$k$. Each WMMSE iteration requires receiver/weight updates, construction of the active covariance sub-block, an $\set{O}(M_k^3)$ eigendecomposition, and power-control bisection, yielding
\begin{equation}\label{eq:complexity}
    \set{O}\!\big(T\textstyle\sum_k(M_k^3+KM_k^2+T_\mu M_k)\big)
    =\set{O}\!\big(TKM_{\text{a}}^3\big),
\end{equation}
where $M_{\text{a}}=\max_k M_k$. In contrast, conventional full-array WMMSE scales as $\set{O}(TKM^3)$ with $M=SM_s$. Since masking restricts each user to $M_{\text{a}}\le L_k^{\max}M_s\ll M$, the per-user complexity is reduced by approximately $(S/L_k^{\max})^3$.



\section{Simulation Results}
\label{sec:results}

We consider a $400\times300$\,m smart container terminal with eight storage lanes (stack height $12$\,m, lane width $15$\,m). Channels are generated using a geometry-based model: LoS is determined by 3D ray-box intersection, while NLoS consists of first- and second-order image-method reflections from PEC container walls. AGV bodies are modeled as $3\times1.5\times1.5$\,m PEC plates for both scattering and blocking. Unless otherwise stated, the system uses $f_c=3.5$\,GHz, $M=512$ antennas ($S=16$, $M_s=32$), $P_{\text{tot}}=40$\,dBm, $N_s^{\max}=2$, $L_k^{\max}=6$, $\gamma=0.02$, $T=5$ WMMSE iterations, and $\rho=0.2$. The container topology is updated every $30$\,min with $10\%$ of stacks reshuffled, and all results are averaged over five topology realizations and 100 time slots.

We compare the proposed scheme with three baselines: \textbf{B1}, a far-field stationary model with full-array WMMSE; \textbf{B2}, the proposed near-field model using only container-induced VRs (no AGV blocking or scattering); and \textbf{B3}, the full channel model with greedy sub-array association based on instantaneous $\beta_{k,s}$. Performance is evaluated in terms of average sum-rate, outage probability, and the CDF of the per-user rate.
\begin{figure}[ht]\centering
\includegraphics[width=0.70\columnwidth]{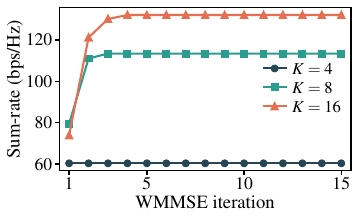}
\vspace{-10pt}
\caption{Convergence of the masked WMMSE for $K\in\{4,8,16\}$.}
\label{fig:convergence}
\end{figure}

\emph{Convergence (Fig.~\ref{fig:convergence}).} Despite restricting each user to its support $\set{I}_k$, the sum-rate increases monotonically and stabilizes within $3$--$4$ iterations for all fleet sizes, confirming that the projection mask $\mat{D}_k$ preserves the stationary-point convergence of standard WMMSE; we adopt $T=5$ as a conservative default. Combined with the $(S/L_k^{\max})^3$ per-user cost reduction of~\eqref{eq:complexity}, the fast stage is both convergent and real-time.

\begin{figure}[ht]\centering
\includegraphics[width=0.70\columnwidth]{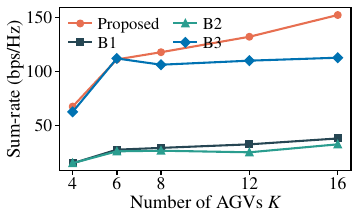}
\vspace{-10pt}
\caption{Mean sum-rate vs.\ number of AGVs $K$ (speed $10$\,km/h).}
\label{fig:sumrate}
\end{figure}
\emph{Sum-rate (Fig.~\ref{fig:sumrate}).} The proposed method beats VR-unaware B1/B2 by a wide margin for all $K$, confirming that ignoring near-field geometry and the triple role wastes power on blocked or poorly focused sub-arrays. Against greedy B3 the two are essentially tied at low density ($K=6$), but the MILP association pulls ahead as $K$ grows $+10.9\%$ at $K=8$, $+20.1\%$ at $K=12$, $+35.0\%$ at $K=16$. This widening gap is the signature of global load balancing under the tight capacity $N_s^{\max}=2$: greedy overloads popular sub-arrays and starves contenders, whereas the MILP spreads load, so its advantage grows precisely as the fleet densifies.

\begin{figure}[ht]\centering
\includegraphics[width=0.70\columnwidth]{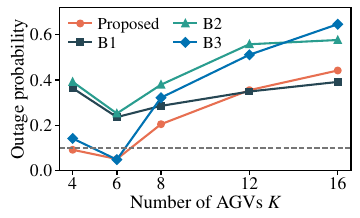}
\vspace{-10pt}
\caption{Outage probability ($R_k<R_{\min}$) vs.\ $K$ (speed $10$\,km/h).}
\label{fig:outage}
\end{figure}

\emph{Outage (Fig.~\ref{fig:outage}).} All methods show a shallow minimum near $K=6$; below it the occasional deep-canyon AGV inflates outage, above it contention drives it up. Among resource-constrained schemes the proposed method has the lowest outage: $0.44$ at $K=16$, vs.\ $0.58$ (B2) and $0.65$ (B3). Consistent with its rate distribution, it also serves the largest fraction of high-rate users and keeps the fewest in complete outage. The far-field B1 attains a slightly lower outage at high $K$ ($0.39$) only by spreading power across all $M$ antennas without the capacity limit $N_s^{\max}$: this egalitarian allocation barely clears $R_{\min}$ for most users while sacrificing over $4\times$ the sum-rate (Fig.~\ref{fig:sumrate}). Under the same resource budget the proposed design degrades the most gracefully while delivering far higher throughput. The absolute outage is high at large $K$ because $N_s^{\max}=2$ fundamentally limits how many AGVs can be served at once.

    \begin{figure}[ht]
        \centering
        \includegraphics[width=0.7\linewidth]{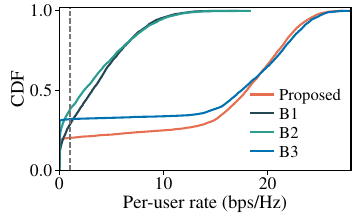}
        \vspace{-10pt}
        \caption{CDF of per-user rate at $K=8$, speed 10\,km/h.}
        \label{fig:cdf}
    \end{figure}
\emph{CDF (Fig.~\ref{fig:cdf}).} The VR-unaware baselines B1 and B2 are concentrated at low rates, with median rates of approximately $3$ and $2.3$\,bps/Hz, respectively, and nearly all users below $15$\,bps/Hz. This highlights the cost of ignoring near-field effects and the AGV triple role. In contrast, the proposed scheme shifts the distribution substantially to the right, achieving a median rate of $17.9$\,bps/Hz and a $90$th-percentile rate of $22.9$\,bps/Hz. It also reduces the outage probability ($<\!1$\,bps/Hz) to $0.20$, compared with $0.32$ for the greedy baseline B3. While some users remain in the moderate-rate region ($1$--$15$\,bps/Hz) due to the limited sub-array capacity ($N_s^{\max}=2$), the proposed scheme achieves the largest fraction of high-rate users and the lowest outage probability.
\section{Conclusion}
We identified the AGV's triple role in smart-port XL-MIMO, where each AGV acts as a user, scatterer, and blocker, coupling user VRs and channels. We proposed a two-timescale masked WMMSE design that enforces VR support and reduces per-user complexity by $(S/L_k^{\max})^3$. We also established a disjoint-VR condition under which beamforming decouples and the PF rate bound is attained. Simulations show over $3\times$ the sum-rate of VR-unaware baselines, with gains growing with fleet density. Predicting this coupling from AGV trajectories for proactive VR management is left for future work.

\bibliographystyle{IEEEtran}
\bibliography{ref}

\end{document}